\documentclass[11pt, letterpaper]{article}

\usepackage{CJKutf8}

\usepackage[whole]{bxcjkjatype}

\usepackage[pdftex]{graphicx,xcolor}
\usepackage[fleqn]{amsmath}

\usepackage{algorithm}
\usepackage[noend]{algorithmic}

\usepackage{cite} 
\usepackage{threeparttable}
\usepackage{amsthm} 
\usepackage{latexsym}
\usepackage{comment}
\usepackage{bm}
\usepackage{enumerate}
\usepackage{bookmark} 
\usepackage{type1cm} 
\usepackage{color}
\usepackage{comment}
\usepackage{multirow}

\usepackage{newtxtext}
\usepackage[varg]{newtxmath}

\usepackage{thm-restate}
\usepackage {threeparttable}
\usepackage{algorithmic}
\usepackage{algorithm}
\theoremstyle{plain}
\newtheorem{theorem}{Theorem}
\newtheorem{lemma}{Lemma}

\newcounter{cntLemmaNumber}

\newcounter{cntTheoremNumber}

\graphicspath{{./figure/}}

\title{Improved Algorithms for Local Failover Routing on Directed Graphs
\author{ Yuki Kawashima
\and Naoki Kitamura
\and Taisuke Izumi
}
} 

\begin{document}
\maketitle

\begin{abstract}
    The local failover routing is a mechanism that routes a packet from a source to a destination only using pre-calculated routing tables, even when several edges fail. In this paper, we study local failover schemes that minimize the number of rewritable bits in the packet header on directed graphs with $k$-arc failures. 
    There are many studies of failover routing on undirected graphs, and it has been investigated whether routing is possible depending on the number of bits in the packet header, the type of failure, the graph properties, etc.
    In contrast, there is not much research on directed graphs.
    Van et al.~first showed the upper and lower bounds of rewritable bits in the packet header on directed graphs. However, their results showed a large gap between the upper and lower bounds.
    The main contribution of this paper is to close the gap between the upper and lower bounds. Specifically, we show that our scheme can route packets with $k$ faulty arcs if the packet header has $\min(k \log ( \frac{e(2n+k-3)}{k}, 2n \log ( \frac{e(2n+k-3)}{2n})))$ rewritable bits, where $n$ is the number of nodes. Moreover, any local failover routing scheme needs $\Omega(k\lceil\log\frac{n}{k}\rceil)$ rewritable bits when the number of faulty arcs is equal to or less than $\frac{3(n-1)}{8}$ and $\frac{n-1}{4}$ rewritable bits when the number of faulty arc is more than $\frac{3(n-1)}{8}$. This result means our scheme is nearly optimal when the number of faulty arcs is approximately less than the number of nodes. 
\end{abstract}

\section{Introduction}
    The local failover routing is a mechanism that delivers a message from $s$ to $t$ only using pre-calculated routing tables, even when several edges fail (i.e., deleted from the network).
    It is known that static local failover routing from $s$ to $t$ in a directed graph is impossible even in the case of a single arc failure unless the packet has rewritable bits in its header~\cite{van2024brief}. Hence, as prior results, we also focus on the local failover routing with packet headers. Several studies have investigated whether message transmission is possible with respect to some parameters such as the number of faulty edges, the properties of the graph structure, and the upper limit of the rewritable bits in the packet header~\cite{van2024brief, dai2024resilience}.
    For the upper bound $k$ on the number of failing arcs, Van et al.~showed that a rewritable header of $k \lceil \log |E| \rceil$ bits suffices to deliver messages between any nodes, here $|E|$ is the number of arcs in the graph.
    On the other hand, the state-of-the art lower bound for header size is $\lceil\log (k+1)\rceil$ bits~\cite{van2024brief}, which has a large gap with the upper bound of Van et al.

    In this paper, we further investigate the upper and lower bounds of the rewritable bits required for static local failover routing in a directed graph.
    Precisely, our main contributions present tighter upper and lower bounds on packet header sizes in several settings, which are stated as follows:
    \begin{restatable}{theorem}{upperBoundKone} \label{thm:upper_k1}
        Let $G = (V, E)$ be a directed graph with $n$ nodes and $D$ be its diameter of $G$. 
        Assuming at most one arc failure, there exists a local failover routing scheme $\mathcal{A}$ that uses $\lceil \log (D + 1) \rceil$-rewritable bits in the packet header. 
    \end{restatable}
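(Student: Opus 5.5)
The plan is to reduce single-failure routing to ``reach any node strictly closer to $t$'', and to use the header only to record the distance level at which the failure was observed. Fix a destination $t$ (tables are per destination). Let $d(u)$ be the distance from $u$ to $t$ in $G$, so $d(u)\le D$, and let $T$ be a BFS in-tree rooted at $t$ in which each $u\neq t$ has a parent $p(u)$ with $d(p(u))=d(u)-1$. I assume the standard model: a node sees its incident failed arcs, the destination and the header, and may rewrite the header. I also assume the implicit hypothesis that $G$ stays strongly connected after removing any single arc. The header holds a value $h\in\{0,1,\dots,D\}$, which is $D+1$ values and so fits in $\lceil\log(D+1)\rceil$ bits. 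The value $h=0$ means ``normal mode: follow $T$''.

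\textbf{Key observation.} Suppose the unique failed arc is $(v,p(v))$ with $d(v)=h$, and let $L_h=\{u : d(u)<h\}$. For any $u\in L_h$, the $T$-path from $u$ to $t$ visits only nodes of distance $<h$. Hence it cannot use the arc $(v,p(v))$, whose tail has distance $h$. So once the packet reaches $L_h$, it can reset the header to $0$ and follow $T$ safely.

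The scheme is as follows.
\begin{enumerate}
\item In normal mode, $u$ forwards to $p(u)$. If $(u,p(u))$ has failed, $u$ writes $h:=d(u)$ and enters detour mode.
\item In detour mode with header $h$, a node $u$ with $d(u)<h$ resets $h:=0$ and follows $T$.
\item Otherwise $u$ forwards along a precomputed detour structure $T_h$, an in-forest of $G$ whose roots are all in $L_h$ (in fact in level $h-1$).
\end{enumerate}
Correctness then reduces to one requirement: for every $v$ at level $h$, following $T_h$ from any node of distance $\ge h$ reaches $L_h$ without using $(v,p(v))$. Termination is immediate, since each detour phase strictly decreases the level that is recorded, and at most one detour phase actually occurs because there is a single failure.

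\textbf{Main obstacle.} The difficulty is constructing $T_h$. It must work simultaneously for all candidate failed arcs $\{(v,p(v)) : d(v)=h\}$, because intermediate nodes know only $h$, not which $v$ detected the failure. The first attempt is to let $T_h$ avoid all these arcs at once. That is, take a BFS in-forest towards $L_h$ in the graph $G_h$ obtained from $G$ by deleting every arc $(v,p(v))$ with $d(v)=h$.

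If $G_h$ does not allow every node of distance $\ge h$ to reach $L_h$, I would exploit the freedom in choosing $T$ and $p(\cdot)$. The idea is to pick parents so that each level-$h$ node $v$ still has, in $G-(v,p(v))$, a path into $L_h$ avoiding the other chosen parent arcs. I would argue this via a Menger-type argument on the cut between $\{u:d(u)\ge h\}$ and $L_h$, using that $G$ minus one arc is strongly connected.

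As a fallback, I would let $T_h$ be arbitrary and handle the one bad case, in which the packet returns to the detecting node $v$ (the only node that knows the failure). In that case $v$ itself forwards along a fixed alternative arc and relies on the fact that a $T_h$-path into $L_h$ can contain at most one level-$h$-to-$L_h$ arc, namely its last arc. I expect verifying that this yields a loop-free walk to be the delicate part of the proof.
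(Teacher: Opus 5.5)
Your level-reset observation and the count of $D+1$ header values are fine. The proof breaks at the step you flag as the main obstacle: the detour structure $T_h$ is never constructed, and in the form you describe it need not exist. Because the header records only the level $h$, nodes other than the detecting one cannot tell which level-$h$ parent arc failed.

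Here is a concrete counterexample. Take nodes $t,a,b,v_1,v_2,w$ and arcs $(a,t),(b,t),(a,b),(b,a),(v_1,a),(v_2,b),(v_1,w),(w,v_1),(v_2,w),(w,v_2),(t,v_1),(t,w),(a,v_2)$.
\begin{itemize}
\item This digraph is $2$-arc-strongly connected, with $d(v_1)=d(v_2)=2$ and $d(w)=3$.
\item The parents $p(v_1)=a$ and $p(v_2)=b$ are forced.
\item The only arcs leaving $\{v_1,v_2,w\}$ are $(v_1,a)$ and $(v_2,b)$. So $G_2$ leaves these three nodes with no exit, and no choice of parents satisfies your Menger-type condition.
\item Suppose the in-forest $T_2$ sends $w$ to $v_1$; then $T_2$ must send $v_1$ to $a$. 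If $(v_1,a)$ fails, your fallback makes $v_1$ use its only other out-arc $(v_1,w)$, and $w$ follows $T_2$ back to $v_1$. The packet bounces between $v_1$ and $w$ forever, and the symmetric case with $v_2$ behaves the same way.
\end{itemize}
Letting $w$ branch on its incoming arc would rescue this instance, but your scheme does not do that, and it is unclear how such a fix generalizes.

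Separately, the hypothesis that $G-e$ is strongly connected for every arc $e$ is not part of the theorem. The model only guarantees that nodes reachable from $s$ in $G-F$ can reach $t$, and $G$ is arbitrary. Under your hypothesis the problem is trivial with one bit: by Edmonds' branching theorem there are two arc-disjoint in-arborescences rooted at $t$, and the packet switches from one to the other on failure. So the assumption changes the problem rather than making an implicit hypothesis explicit.

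The missing idea is that $s$ is also in the header and is never rewritten. Forwarding rules may therefore depend on the pair $(s,t)$, not on $t$ alone. The paper proceeds as follows:
\begin{enumerate}
\item Fix one shortest $s$-$t$ path $P=v_0,e_0,v_1,\dots,v_d$ with $d\le D$.
\item Route packets carrying header value $d$ along $P$.
\item When $e_i$ fails, write $i$ into the header.
\end{enumerate}
For this pair the value $i$ identifies the failed arc exactly, so your simultaneity problem never arises. A $v_i$-$t$ path $P_i$ in $G-e_i$ exists, because $v_i$ is reachable from $s$ in $G-e_i$ via the prefix of $P$. Nodes on $P_i$ forward packets carrying value $i$ along $P_i$. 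Distinct detours cannot conflict, since they carry distinct header values and each is a simple path. With a single failure $P_i$ is intact. This uses $d+1\le D+1$ values, that is, $\lceil\log(D+1)\rceil$ bits.
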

    
    \begin{restatable}{theorem}{upperBoundK} \label{thm:upper_k}
        For any $k$, let $G = (V, E)$ be a directed graph with $n$ nodes. 
        Assuming at most $k$ arc failure, there exists a local failover routing scheme $\mathcal{B}$ that uses at most
        \[
        \min(k \lceil\log \frac{e(2n+k-3)}{k}\rceil,2n \lceil\log \frac{e(2n+k-3)}{2n}\rceil)
        \]
        rewritable bits in the packet header.
    \end{restatable}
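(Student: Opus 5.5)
The plan is to reduce Theorem~\ref{thm:upper_k} to a counting statement. We design a scheme $\mathcal{B}$ whose header encodes a \emph{multiset} of size at most $k$ drawn from a ground set of $2n-2$ ``positions''. Adding one dummy symbol to absorb the case of fewer than $k$ failures, the number of distinct headers is at most
\[
\binom{(2n-2)+k-1}{k}=\binom{2n+k-3}{k}=\binom{2n+k-3}{2n-3}.
\]
Applying $\binom{a}{b}\le (ea/b)^b$ to both forms then gives the two terms of the minimum, $k\lceil\log\frac{e(2n+k-3)}{k}\rceil$ and (after the harmless relaxation $2n-3\le 2n$) $2n\lceil\log\frac{e(2n+k-3)}{2n}\rceil$. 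In the second term the monotonicity of $x\log(ea/x)$ for $x\le a$ is used. So the real content is the combinatorial scheme; the final arithmetic is routine.

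For the scheme, fix a root $r$, a BFS in-arborescence $T_{\mathrm{in}}$ into $r$, and a BFS out-arborescence $T_{\mathrm{out}}$ from $r$. More generally, for every set $F$ of arcs, every node deterministically computes the same canonical in/out arborescences $T_{\mathrm{in}}(F),T_{\mathrm{out}}(F)$ of $G-F$, rooted at a canonical root $r(F)$. A nominal route from the current node to $t$ goes up $T_{\mathrm{in}}(F)$ to $r(F)$ and then down $T_{\mathrm{out}}(F)$ to $t$. Each arc traversal is labelled by a position in $\{1,\dots,2n-2\}$: depth in the in-tree phase for the first $n-1$ positions, and $n-1$ plus depth in the out-tree phase for the last $n-1$. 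When the packet at node $v$ finds the next arc of its nominal route failed, it appends that arc's position to the header. The arc itself is not written, since its position together with the already-known set $F$ identifies it: the canonical trees are common knowledge, and $v$ is determined by the position and the phase. The packet then recomputes with $F\cup\{\text{that arc}\}$.

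The key steps, in order, are as follows.
\begin{enumerate}
\item Show that the header can be stored as a sorted multiset rather than an ordered list. For this I would arrange the canonical choices so that the recorded positions are nondecreasing over time; for instance, choose $T(F\cup\{a\})$ to agree with $T(F)$ on the part already traversed, so progress is never undone.
\item Show decodability: from the sorted multiset and the current node, every node can reconstruct $F$ by replaying the positions in order, each time identifying the failed arc inside the current canonical tree.
\item Show termination and correctness: each failure adds a new arc to $F$, so at most $k$ recomputations occur. With $G-F$ still strongly connected (the standing assumption that delivery is possible), the final nominal route reaches $t$.
\end{enumerate}

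I expect step 1 (monotonicity, and with it step 2's replay) to be the main obstacle. A naive recomputation may move the root or restart the in-tree phase, so that a later failure has a smaller position than an earlier one; then sorting would destroy the information needed to decode. My first attempt is to make the canonical trees incremental: when a failure at position $j$ is recorded, keep the already-walked prefix fixed and only re-choose the suffix. I would then argue that any later failure lies at position $\ge j$ in this concatenated $2(n-1)$-length progress measure. If this fails, the fallback is to store positions as increments within each of the $2n-2$ levels, which still yields a multiset over $2n-2$ types and the same count.
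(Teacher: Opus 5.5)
Your final count has the right form: a size-$k$ multiset over $2n-2$ types is, by stars and bars, the same as the paper's $k$-subset of $2n+k-3$ positions. The gap is that your scheme does not supply what that count needs, namely failure labels that are both nondecreasing in time and confined to about $2n$ values.

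\textbf{Why step 1 fails.} In your two-tree design a label can be read in two ways, and neither works.
\begin{itemize}
\item \emph{Depth in the current canonical tree.} Monotonicity fails outright. After a failure at $v$, the depth of $v$ in $T_{\mathrm{in}}(F\cup\{a\})$ can jump up, and later failures can sit at any depth. So sorting destroys the replay order. Keeping the order instead costs about $k\log(2n-2)$ bits, which is too much when $k=\Theta(n)$.
\item \emph{Cumulative step count along the walked route.} Labels are then monotone but unbounded. Freezing the walked prefix does nothing to shorten the re-chosen suffix, which can again have length up to $2n-2$ after every failure. Nothing keeps the range below order $kn$, which only yields about $k\log(en)$ bits. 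This is exactly the blow-up the paper points out for naive rerouting.
\end{itemize}
Your fallback, ``increments within each level,'' is not specified enough to check decodability. So step 1 is not a technical obstacle; it is the whole content of the theorem, and the tree construction gives no mechanism for it.

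\textbf{A secondary error.} The standing assumption is not that $G-F$ is strongly connected. It is only that every node reachable from $s$ in $G-F$ can reach $t$. Under that assumption a common root $r(F)$, reachable from the current node and reaching $t$, need not exist. Taking $r=t$ repairs this, but it turns your scheme into plain shortest-path rerouting, with the blow-up described above.

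\textbf{The missing idea.} Count time only over \emph{first-time} arc attempts, and make previously used arcs free. At each failure the paper reroutes along a path in $G-F'$ that minimizes the number of not-yet-tried arcs (weight $0$ on arcs already traversed). It records the failure profile over the first-edge sequence $\tau(P)$. Lemma~\ref{lma:tryEdgeNum} then bounds the number of distinct arcs ever tried by $2n+k-3$:
\begin{itemize}
\item at most $n-1$ of them enter a new node;
\item at most $k$ of them fail;
\item at most $n-2$ of them return to a visited node. Each such arc closes a cycle of used arcs that can be contracted, and the zero-weight rule ensures a new arc is never spent between nodes already joined by used arcs.
\end{itemize}
Label each failure by the number of successful first-time arcs preceding it. These labels are automatically nondecreasing. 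They take at most $2n-3$ genuine values, because the packet has not reached $t$ when it fails, plus one dummy value for unused failures. That is precisely your multiset over $2n-2$ types, giving $\binom{2n+k-3}{k}$.

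Your own count is also off by one: $2n-2$ genuine positions plus a dummy give $\binom{2n+k-2}{k}$, not $\binom{2n+k-3}{k}$.
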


    \begin{restatable}{theorem}{lowerBoundKthree} \label{thm:lower_k3}
        There exists a lower bound graph $G$ of $n$ nodes such that any local failover routing tolerates at least three arc failures requires at least $\lceil \log (n - 1)\rceil - 1$ and $\lceil \log (n - 2)\rceil - 1$ rewritable bits in the packet header when $n$ is odd and even, respectively.
    \end{restatable}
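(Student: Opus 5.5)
We will prove the bound with an explicit lower-bound graph and a counting (pigeonhole) argument on header states. The target is to force the header to distinguish about $(n-1)/2$ different situations; this gives $\lceil \log((n-1)/2)\rceil = \lceil \log(n-1)\rceil - 1$ bits, and $n-2$ replaces $n-1$ in the even case.

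\textbf{Construction.} Let $m = \lfloor (n-1)/2 \rfloor$. The graph has a source $s$, a destination $t$, and $m$ ``gadget'' nodes $v_1,\dots,v_m$, each paired with a partner $u_i$. If $n$ is even, one node is left unused (isolated or attached harmlessly), which explains the $n-2$. The arcs are chosen so that from $s$ the packet must traverse a directed cycle (or path) visiting $v_1,\dots,v_m$ in order. Each $v_i$ has an out-arc toward $t$ that goes through $u_i$, and the cycle arcs provide the alternative. The essential directed feature is that a node which has forwarded the packet onward cannot be returned to except along the designated cycle. So when the packet comes back to $v_i$, the only information it can use about where it has been is the header.

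\textbf{Adversary.} Three failures suffice to hide the exit. Fail the direct arc $s\to t$, if present. Then for a chosen index $j$, keep only the exit at $v_j$ alive. One failure cuts the edges to the other exits through a shared bottleneck, and one more failure blocks the cheap ``return'' shortcut, so that the packet must explore. The intended effect is that the locally visible failure pattern at every node $v_i$ with $i\ne j$ is identical across all choices of $j$. Only the outcome of probing, a dead end at $u_i$ versus success, differs, and a dead end forces the packet back onto the cycle.

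\textbf{Pigeonhole.} Since the routing tables are static, forwarding at a node depends only on the incoming port, the local failure set, and the header. Suppose the header has fewer than $m$ distinct values. Look at the sequence of header values with which the packet arrives at the $v_i$'s along its trajectory. Either some node sees the same (in-port, header, local-failure-view) state twice, in which case the walk is periodic and loops forever. Or the scheme cannot tell which of the $m$ candidate exits it has already tried. In that case we pick $j$ to be an exit it ``skips'' and obtain a contradiction. This requires $2^{b} \ge m$, i.e.\ $b \ge \lceil \log m \rceil$, which gives the stated bounds.

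\textbf{Main obstacle.} The hardest step is designing the gadget so that exactly three arc failures make the local views independent of $j$, while still forcing each probe to return through a common node. The cycle structure must make the revisited node carry no memory except the header. I would first try a ``star-of-dead-ends'' in which every $u_i$ points back to a hub $h$ and $h$ feeds the cycle. The hub then absorbs all returns, so state collisions are forced at $h$.
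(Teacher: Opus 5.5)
Your proposal has a genuine gap. The lower-bound graph, which is the whole content of this proof, is never pinned down, and the failure budget you sketch does not isolate a single exit. If each exit $v_i\to u_i\to t$ reaches $t$ independently, leaving only exit $j$ alive costs $m-1$ failures. Your allocation (one failure on a direct $s\to t$ arc, one on a ``shared bottleneck'', one on a ``return shortcut'') gives no structure that cuts off the exits on \emph{both} sides of $j$.

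The paper's gadget $A_N$ is a ladder:
\begin{itemize}
\item two rails $u_0\to\cdots\to u_{N-1}$ and $v_0\to\cdots\to v_{N-1}$, with rungs $(u_i,v_i)$;
\item a hub $r=s$ whose only out-arc is $(r,u_0)$;
\item return arcs $(u_i,r),(v_i,r)$ from every gadget node, and $t=v_{N-1}$.
\end{itemize}
To force rung $j$, the adversary fails $(v_{j-1},v_j)$, which kills every rung $i<j$. It also fails both out-arcs $(u_{j+1},u_{j+2})$ and $(u_{j+1},v_{j+1})$, which kills every rung $i>j$. All three failures are consumed, so none is left for an $s\to t$ arc or a shortcut. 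Failing both arcs at $u_{j+1}$, rather than $(u_j,u_{j+1})$, is essential. Otherwise $u_j$ sees a local failure, and the header-free rule ``walk the $u$-rail until blocked, then cross'' already succeeds.

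Your target invariant, identical local failure views for all $j$, is unattainable. Each failed arc has a tail whose identity depends on $j$ ($u_{j+1}$ and $v_{j-1}$ above). The argument needs something weaker:
\begin{itemize}
\item Every attempt enters the gadget through the single arc $(r,u_0)$, and each $u_i$ has a unique in-arc.
\item On the way to rung $j$ the packet visits only $u_0,\dots,u_j$, none of which has a failed out-arc.
\item Hence the first rung crossed in an attempt is a $j$-independent function $c(h)$ of the header value $h$ on $(r,u_0)$, computed as in the failure-free graph.
\item Under the failures for $j$, an attempt can succeed only if $c(h)=j$. Since $c$ takes at most $2^b$ values, choosing $j$ outside its image defeats the scheme.
\end{itemize}
Your dichotomy ``either a state repeats or the scheme cannot tell which exits it tried'' does not establish this.

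Finally, your node count is off. With $m=(n-1)/2$, the nodes $s$, $t$ and the $2m$ gadget nodes already total more than $n$, before adding the hub. The paper reaches $N=(n-1)/2$ for odd $n$ by identifying $s$ with the hub $r$ and $t$ with $v_{N-1}$. For even $n$ it inserts a relay node $q$ (returns go to $q$, then along $(q,r)$), giving $N=(n-2)/2$.
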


    \begin{restatable}{theorem}{lowerBoundK} \label{thm:lower_k}
        For any integer $3 \leq k \leq \frac{3(n-1)}{8}$, there exists a lower bound graph $G$ of $n$ nodes such that any local failover routing tolerates at least $k$ arc failures requires at least $\Omega(k\lceil\log\frac{n}{k}\rceil)$ bits in the packet header.
    \end{restatable}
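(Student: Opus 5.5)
We reduce to a counting argument: exhibit a graph and a large family of failure sets such that, at one fixed node and with one fixed incoming arc, the packet must be in pairwise distinct header states for distinct failure sets. If there are $M$ such failure sets, the header needs at least $\log M$ bits. We aim for $M \ge \binom{m}{k'}$ with $m = \Theta(n)$ and $k' = \Theta(k)$, which gives $\log M \ge k'\log(m/k') = \Omega(k\log\frac{n}{k})$.

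\textbf{The graph.} We generalize the gadget behind Theorem~\ref{thm:lower_k3}. Take a source $s$, a destination $t$, and a ``hub'' node $h$. Add $m \approx n$ intermediate nodes $v_1,\dots,v_m$, with arcs $h\to v_i$, $v_i\to t$, and $v_i \to h$. The only way to $t$ is through some $v_i$, and the only way back from $v_i$ to the hub is the arc $v_i\to h$.

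\textbf{The failure sets.} Fix a set $S$ of $k'$ indices, say $k' \approx k/2$. For each $i \in S$ we fail the arc $v_i \to t$, using $k'$ failures. We then spend the remaining budget, at most $k-k'$ further failures, to make $S$ an obstacle the packet must ``discover'': the routing at $h$ cannot see which $v_i\to t$ are dead, because failures are observed only locally at $v_i$. The constant $3(n-1)/8$ presumably comes from this budget split, together with keeping $m \ge$ a constant multiple of $k$ so that $\log(m/k)$ is positive. Choosing the live exit to be one particular $j \notin S$ forces the packet to visit the members of $S$ before reaching $v_j$.

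\textbf{Distinguishing states.} Since routing tables are deterministic and depend only on (node, in-port, header, local failures), the packet's trajectory is determined by the header it carries when it returns to $h$. Suppose two failure sets $S \ne S'$ produced the same header on some return to $h$ from the same in-arc. Then we can hybridize: choose an index $i \in S \setminus S'$. From that point on, the executions under $S$ and under $S'$ are identical until the packet touches a node where the two sets differ. We combine the two failure sets, or choose the live exit inside the symmetric difference, so that the packet either revisits a configuration and loops forever, or is sent into a dead end it cannot recover from. This contradicts correctness.

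To make this rigorous, we would consider the header after the packet has ``learned'' all of $S$. The packet must remember $S$, or at least enough to avoid cycling, because $h$ has no local information. A cleaner version is an adversary argument: for every prefix, the adversary keeps the live exit among the unvisited $v_i$. This forces the packet to visit all $k'$ failed ones, and the header at the final return to $h$ must encode the visited set, since otherwise two sets could be swapped, as in the hybrid argument above, and cause a loop.

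\textbf{Main obstacle.} The hardest step is the hybrid/indistinguishability argument. Naive equality of headers at $h$ does not immediately yield a loop, because the subsequent behavior depends on where the failures are. I would handle this by taking the first time the headers coincide with a common in-port and splicing the later part of one execution onto the failure set of the other. Making the splice consistent with at most $k$ failures is what dictates the choices of $k'$ and $m$, and hence the constant $3/8$. The final estimate $\log\binom{m}{k'} \ge k'\log(m/k')$ then yields $\Omega(k\lceil\log\frac{n}{k}\rceil)$.
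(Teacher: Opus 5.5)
There is a genuine gap, and it lies in the construction itself, not only in the splicing step you flag as the main obstacle. Your hub graph can be routed correctly with \emph{no} rewritable bits. Let $h$ forward a packet arriving from $s$ to $v_1$. Let $h$ forward a packet arriving via $(v_i,h)$ to the smallest $j>i$ with $(h,v_j)\notin F_h$. Let each $v_i$ try $(v_i,t)$ and otherwise return to $h$. (If you prefer not to exploit the in-arc, a $\lceil\log(m+1)\rceil$-bit counter does the same job.) Take any admissible failure set: if $(h,v_i)$ survives and $(v_i,t)$ is dead, then $(v_i,h)$ must survive, since otherwise $v_i$ is reachable from $s$ but cannot reach $t$. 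Under every such set, the packet scans the $v_i$ in a fixed order and stops at the first live exit. So the executions for different $S$ coincide until the first live exit is met. The set of visited intermediates is always a prefix of one fixed order, which gives at most $m$ states, not $\binom{m}{k'}$. The claim that the header must encode the visited set is therefore false on your graph, and no hybrid argument can establish it. The underlying issue is that a single parallel choice, however large the adversary's menu, never forces the packet to \emph{retain} much: what it learned about dead exits can be forgotten once the scan moves past them.

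The missing idea is a structure that forces $\Theta(k)$ independent pieces of information to be held \emph{simultaneously}. The paper does this in three steps.
\begin{itemize}
\item \emph{Construction.} It chains $k/3$ ladder gadgets $A_N^0,\dots,A_N^{k/3-1}$, with $N=\frac{3(n-1)}{2k}$, via arcs $(v_{N-1}^i,u_0^{i+1})$. It adds an arc from every node back to $r=s$, and gives $r$ the single out-arc $(r,u_0^0)$.
\item \emph{Failures.} Three failures per gadget, namely $(u_{i_j+1},u_{i_j+2})$, $(u_{i_j+1},v_{i_j+1})$ and $(v_{i_j-1},v_{i_j})$, leave the rung $(u_{i_j},v_{i_j})$ as the only way through. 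The node $u_{i_j}$ that must make the choice sees no local failure.
\item \emph{Counting.} Every wrong guess ends at a node whose only live exit leads back to $r$. So after its last departure from $r$ the packet must cross all gadgets through the correct rungs. No node on that route has a faulty out-arc, so the route is determined by the header on $(r,u_0^0)$ alone. Distinct tuples $(i_0,\dots,i_{k/3-1})$ need distinct routes and hence distinct headers, giving at least $\frac{k}{3}\log N=\Omega(k\log\frac{n}{k})$ bits.
\end{itemize}
Your ``fixed node, fixed in-arc, injective header'' framing has the right shape. It only works once the graph makes the post-commitment route failure-free and header-determined across many independent choices. Incidentally, the threshold $\frac{3(n-1)}{8}$ is just the requirement $N\ge 4$, not a budget split.
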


    \begin{restatable}{theorem}{lowerBoundMax} \label{thm:lower_max}
        For any integer $\frac{3(n-1)}{8} < k$, there exists a lower bound graph $G$ on the $n$ nodes such that any local failover routing for it and at least $k$ arc failures require at least $\frac{n-1}{4}$ bits in the packet header.
    \end{restatable}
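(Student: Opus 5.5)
We prove Theorem~\ref{thm:lower_max} by a counting (pigeonhole) argument on a concrete lower-bound graph. Local failover routing is deterministic: the next arc and the new header depend only on the current node, the incoming arc, the current header, and the set of surviving out-arcs at that node. Along a route the only memory that persists is therefore the header. We build a graph in which the packet must visit some node many times, and in which the correct forwarding decision at one late visit depends on a large amount of information gathered earlier. That information must then fit in the header.

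\textbf{Construction.} Take roughly $m=\frac{n-1}{4}$ \emph{gadgets}, each of constant size (about four nodes), arranged in a chain from $s$ towards $t$. The intended form is as follows.
\begin{itemize}
\item Gadget $i$ has two parallel directed branches. The adversary fails exactly one of them, which encodes a bit $b_i$. Failing one arc per gadget costs about $m \le \frac{n-1}{4}$ failures.
\item A return path leads from the end of the chain back to a hub node $h$.
\item From $h$ we place a second set of arcs. For each gadget $i$, the only surviving exit towards $t$ is the one matching $b_i$. The adversary fails the complementary arcs, so a few more failures are needed per gadget.
\item The total failure budget stays below $\frac{3(n-1)}{8}<k$, which is where this threshold comes from.
\end{itemize}
In the intended design, the packet learns $b_i$ only locally, inside gadget $i$, because a node sees only its own incident failures. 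Later, at $h$ or along a second pass, it must choose correctly among alternatives whose correctness depends on $b_i$. A wrong choice leads into a dead-end region. Any node whose out-arcs are all failed, with no way back, is a dead end, since the graph is directed and the packet cannot go backwards.

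\textbf{Counting step.} Fix any scheme. Consider the $2^m$ failure patterns $b\in\{0,1\}^m$. The pattern is chosen so that all second-stage nodes see identical local failure sets regardless of $b$; the relevant failures lie on arcs out of first-stage nodes. We then compare the packet's arrival state at $h$ for different patterns. This state consists of the incoming arc, which we arrange to be unique, and the header. If two patterns $b\ne b'$ produced the same header at $h$, the subsequent deterministic walk would be identical in both. Let $i$ be a coordinate with $b_i\ne b'_i$. The walk must pass gadget $i$'s second-stage choice and take the same arc in both cases, so it fails in one of them. Hence the number of distinct headers is at least $2^m$, and the header needs at least $m=\frac{n-1}{4}$ bits.

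\textbf{Main obstacle.} The hard part is designing the second stage so that it truly cannot re-observe $b_i$ locally. If any second-stage node had an out-arc whose failure status correlated with $b_i$, the router could simply read the bit again, and the counting argument would collapse. The failures that enforce dead ends must therefore sit at nodes that the packet reaches only after committing. My first attempt is to make each wrong choice enter a node whose unique out-arc is failed in every pattern, with that arc failed uniformly across patterns. Such failures are pattern-independent information and cost a constant per gadget.

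Separately, we must check two things:
\begin{itemize}
\item The first stage is actually traversable in all patterns, since each gadget keeps one branch alive.
\item The node and arc counts give exactly $\frac{n-1}{4}$ gadgets and at most $\frac{3(n-1)}{8}$ failures. We pad with isolated or chain nodes when $n-1$ is not divisible by 8.
\end{itemize}
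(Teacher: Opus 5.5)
\textbf{There is a genuine gap.} It sits in the mechanism your counting step relies on.

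\emph{Dead ends are excluded by the model.} The scheme only has to deliver when every node reachable from $s$ in $G-F$ can still reach $t$ in $G-F$ (step~4 of the game). This is why the paper's lower-bound graphs route every node back to $s=r$, e.g.\ the arcs $(u_i,r)$, $(v_i,r)$ in Lemma~\ref{lma:remove3arcs}. A wrong-branch node entered through an intact arc, with all its out-arcs failed, violates this promise. So your $2^m$ patterns impose no constraint on the scheme at all. If you add recovery arcs to restore the promise, a wrong choice is no longer fatal: the packet returns, rewrites its header, and tries again. Then ``equal headers at $h$ imply failure in one pattern'' no longer follows. You would instead have to look at the walk after the \emph{last} departure from $s$, and show that no node on the correct route has a faulty out-arc, so that this walk is a function of the header at that moment.

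\emph{The design is inconsistent.} If all second-stage nodes see the same failure sets in every pattern, and the dead-end arcs are failed uniformly, then the surviving second-stage routes do not depend on $b$. Hence no second-stage choice can depend on $b$. The $b$-dependent failures have to sit on out-arcs of wrong-branch nodes that are reached only after committing. You name this as the main obstacle but leave it unresolved, so no concrete graph is ever given.

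\emph{The parameters do not close.} $(n-1)/4$ gadgets of four nodes already use $n-1$ nodes, leaving nothing for $h$, $t$, the return path and the dead-end nodes. One failure per gadget in stage one plus ``a few more'' in stage two exceeds the $1.5$ failures per gadget that $3(n-1)/8$ allows. So the claim that the threshold comes from your failure budget is unsupported.

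\textbf{The missing idea.} The statement is a corollary of Theorem~\ref{thm:lower_k} by monotonicity in $k$, which is exactly the paper's proof. Apply Theorem~\ref{thm:lower_k} with $k_0=3(n-1)/8$. Then $N=3(n-1)/(2k_0)=4$, and the graph is $(n-1)/8$ copies of $A_4$ (eight nodes, three failures, two bits each). The bound becomes $\frac{k_0}{3}\log N=\frac{n-1}{8}\cdot 2=\frac{n-1}{4}$. A scheme tolerating $k>k_0$ failures must in particular handle every pattern that uses only $k_0$ of them, so the same bound holds. This is where both $3(n-1)/8$ and $(n-1)/4$ come from.

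\textbf{A possible self-contained repair.} If you want an argument in your own spirit, drop the first stage entirely. The packet never needs to \emph{learn} $b$; the correct route only needs to be locally invisible. Use a chain of diamonds: $s\to z_1$, arcs $z_i\to w_i^0, w_i^1\to z_{i+1}$, and $z_{m+1}=t$. Add an arc from every other node back to $s$, and fail $(w_i^{1-b_i},z_{i+1})$. Combined with the last-departure argument above, this should force $2^m$ distinct headers using only $m$ failures on $3m+2$ nodes. Taking $m=\lfloor (n-2)/3\rfloor$ (padding with isolated nodes), this is already at least $(n-1)/4$ for $n\ge 11$.
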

    Table~\ref{tab:results} summarizes the previous results and our new contributions.
    For one failure, our upper bound replaces the worst-case path length $n-1$ with the diameter $D$. 
    Since the diameter $D$ is at most $n - 1$, our result of $\lceil \log (D + 1) \rceil$ is better than the existing $\log n$.
    
    For multiple failures, the improvement over the previous upper bound $k\lceil \log |E| \rceil$ is seen by comparing it with the two terms in Theorem~\ref{thm:upper_k}. The first term is roughly
    \[
        k\log\frac{n+k}{k}=k\log\left(1+\frac{n}{k}\right),
    \]
    which replaces the factor $\log |E|$ by $\log(1+n/k)$. Thus, when $k$ is small or moderate compared with $n$, the cost is about $k\log(n/k)$ rather than $k\log |E|$; the saving becomes larger as $k$ increases. 
    When $k$ is large, the second term,
    \[
        n\log\frac{n+k}{n}=n\log\left(1+\frac{k}{n}\right),
    \]
    can be smaller, so the minimum prevents the bound from growing linearly with $k\log |E|$. On the lower bound, Theorems~\ref{thm:lower_k3}--\ref{thm:lower_max} improve the previous lower bound $\lceil\log(k+1)\rceil$ by showing that some graphs require a header size that grows with $n$. Thus, for a wide range of $k$, the known upper and lower bounds become much closer.
    
    The theorems above successfully narrow the gap between the upper and lower bounds on the number of rewritable bits required for packet headers.
    \begin{table*}[h]
        \centering
        \caption{Summary of related work and our new contribution.}
        \label{tab:results}
        \scalebox{0.625}{
        \begin{tabular}{c c c c c}
            & Number of failures & Upper bound of rewritable bits & Lower bound of rewritable bits & Failure type\\
            \hline \hline
            Van et al.~\cite{van2024brief} &  $k=1$ & $\log n$ & $1$ & static\\
            Van et al.~\cite{van2024brief} &  $k\geq2$ & $k \lceil \log |E|\rceil$ & $\lceil \log (k+1)\rceil$ & static\\ \hline
            Theorem~\ref{thm:upper_k1} &  $k=1$ & $\lceil \log (D + 1) \rceil$ & - & static\\
            Theorem~\ref{thm:upper_k} & $k \geq 2$ & $\min(k \lceil\log \frac{e(2n+k-3)}{k}\rceil,2n \lceil\log \frac{e(2n+k-3)}{2n}\rceil)$ & - & static \\
            Theorem~\ref{thm:lower_k3} & $k=3$ & - & $\lceil \log (n-1)\rceil - 1$ & static \\
            Theorem~\ref{thm:lower_k} & $3 \leq k \leq \frac{3(n-1)}{8}$ & - & $\Omega(k\lceil\log\frac{n}{k}\rceil)$ & static \\
            Theorem~\ref{thm:lower_max} & $\frac{3(n-1)}{8} < k$ & - & $\frac{n-1}{4}$ & static
        \end{tabular}
        }
    \end{table*}

    \subsection{Related Works}
    There has been a lot of research on failover routing, in both directed and undirected networks. First, we will introduce some results on undirected graphs.
    Feigenbaum et al.~first presented theoretical study on local failover routing~\cite{feigenbaum2012brief}. They showed that, without header information, single-edge failures are always tolerable, but tolerating multiple failures is generally impossible.
    Dai et al.~investigated the limitation by Feigenbaum et al.~in more details. Precisely, they showed that the routing without rewriting packet headers is possible for two-edge failures, but impossible for three or more failures~\cite{dai2023tight}.
    Chiesa et al.~proposed a routing method that can tolerate $k$-$1$ failures using $\log k$ bits in $k$-edge-connected graphs for any $k \leq 5$~\cite{chiesa2016resiliency}.
    Foerster et al.~studied fault tolerance in two variants of the models in which vertices can and cannot identify the source of a packet. They showed that perfect fault tolerance is impossible for nonplanar graphs. They also proposed algorithms achieving perfect fault tolerance for all outerplanar graphs and related settings, as well as for nonouterplanar graphs where the destination is within two hops of the source~\cite{foerster2021feasibility}.
    Dai et al.~classified link failures into three types: static (i.e., links which permanently and simultaneously fail), semi-dynamic (removing the assumption that links fail simultaneously), and dynamic (removing the assumption that links fail permanently), and examined the fault tolerance of failover routing for each type, clarifying its capabilities and limitations~\cite{dai2024resilience}. As a result, they showed a routing method which tolerates $k-1$ dynamic link failures in $k$-edge-connected graphs for $k \leq 5$. Furthermore, they showed that this result can be extended to any $k$ by providing $\log k$-rewritable bits in the packet header. They also showed that rewriting $3$ bits suffices to cope with $k$ semi-dynamic failures. However, on general graphs, tolerating $2$ dynamic failures becomes impossible without rewritable bits. Even by rewriting $\log k$ bits, fault tolerance is impossible for $k$ dynamic failures.

    In contrast to undirected networks, the local failover routing for directed networks is just at the beginning.
    Grobe et al.~carefully examined whether existing methods for undirected graphs can be applied to directed graphs, and compared the existing methods with their proposed method through simulations. As a result, they showed that their application is superior to the existing algorithms in most topologies~\cite{grobe2024local}.
    Van et al.~analyzed real-world topologies and showed that two-edge failure tolerance is possible in many real networks using two additional bits~\cite{van2024short}. However, these two earlier results do not provide any  theoretical analyses.
    Van et al.~investigated local failover routing in directed graphs and gave upper and lower bounds on the number of bits required in the packet header~\cite{van2024brief}. They showed that at least one bit is required for one failure in the packet header and that $\log n$ bits are sufficient to tolerate one failure. They also showed that at least $\log (k+1)$ bits are required for $k > 1$ failures, and $k\log|E|$ bits are sufficient to tolerate $k$ failures.

    Therefore, previous work on local failover routing in directed graphs leaves a large gap between the upper and lower bounds on the number of rewritable bits in the packet header. This paper focuses on this gap in the static failure model. Compared with the known upper bound of Van et al., our routing schemes reduce the required header size by encoding failure information more compactly. On the lower bound, our constructions show that substantially more than the known lower bound are sometimes necessary. These upper and lower bounds show how many rewritable header bits are sufficient, and how many can be necessary, for storing failure information in directed local failover routing.

    \subsection{Organization}
        In Section~\ref{sec:model}, we explain the routing model.
        In Section~\ref{sec:upper}, we prove the upper bounds on header size for local failover routing.
        In Section~\ref{sec:lower}, we prove the lower bounds on header size for local failover routing.
        Finally, in Section~\ref{sec:conclusion}, we explain the conclusion and open problems.
        
\section{Model} \label{sec:model}
    We consider a simple directed graph $G = (V, E), |V|=n$, where nodes represent routers and arcs represent communication links. Each node is assigned a unique identifier. Each arc is also assigned with a unique identifier represented by the pair of its endpoints' IDs. The diameter of the graph $G$ is denoted by $D$.  
    A local failover routing $R(G,k)$ for $G = (V, E)$ and $k$ is defined as a collection of $|V|$ local forwarding rules assigned to vertices in $V$. 
    A local forwarding rule $R_v$ at $v$ is defined as the following function:
    \begin{align*}
        & R_v:E \times 2^{|\delta(v)|} \times V \times V \times \{0,1\}^{\ast} \to E \times \{0,1\}^{\ast}. \\
    \end{align*}
    Here, $\delta(v)$ is the set of outgoing arcs of $v$ (i.e., $\delta(v) = \{(v, u) \in E:u \in V\}$).
    $R_v$ determines how to process a packet arriving at the node. For $R_v(\mathit{in}, F_v, s, t, b) = (\mathit{out}, b')$, the given arguments and returned values mean: 
    \begin{itemize}
        \item $in$ : The incoming arc from which the packet arrived (or $\varepsilon$ if the packet is originated at the node).
        \item $F_v$ : The set of locally faulty outgoing arcs. 
        Note that if the node $v$ sends a message through a faulty arc, it finds the arc failed. Then that arc is added to $F_v$.
        \item $s$ : The ID of the source node of the packet. This information is stored in the packet header.
        \item $t$ : The ID of the target node of the packet. This information is stored in the packet header.
        \item $b$ : A rewritable bit string carried in the packet header.
        \item $out$ : The outgoing arc to which the packet is sent.
        \item $b'$ : The updated bit string carried in the header of the packet sent out.
    \end{itemize}
    Note that the IDs of $s$ and $t$ in the packet header are never rewritten.
    When a packet arrives at a node other than its destination $t$, the node decides the neighboring node to which that packet is forwarded following the rule $R_v$. After that, it sends the packet to the next node. When a node fails to send a packet, it stores the faulty arc in $F_v$ and rewrites the packet header. Then it retries the packet forwarding according to the forwarding rules.
    We model the directed local failover routing as a two-player game between player 1 (the designer) and player 2 (the adversary):
    \begin{enumerate}
        \item The players are given a simple directed graph $G = (V, E)$ with $n = |V|$ nodes and a failure parameter $k$.
        \item The player 1 defines a set of local forwarding rules for each node, which must be able to route packets from a given source $s$ to a target $t$ as long as a $s$-$t$ path exists in the current graph.
        \item Let $F$ be a subset of $E$ that contains at most $k$ arcs. The player 2 removes $F$ from $G$.
        \item For any $s$ and $t$ such that  every node reachable from $s$ in $G - F$ is reachable to $t$ in $G - F$, verify if the predefined local forwarding rules can successfully guides the packet from $s$ to$ t$ or not.
        If it succeeds for all those pairs, then player 1 wins. Otherwise player 2 wins.
    \end{enumerate}
    A local failover routing scheme $\mathcal{A}$ is an algorithm of outputting a local failover routing for a given network $G$ and a threshold parameter $k$.
    The efficiency of a local forwarding rule is measured by the size of rewritable bits used in designed forwarding rules (referred to as header size).
    When the header size of any packet that can be sent in routing $R(G,k)$ is less than or equal to $B$, we say that the header size of $R(G, k)$ is at most $B$. For any $n$-node directed graph $G$ and the failure parameter $k$, if the header size of $R(G,k)$ outputted by the scheme $\mathcal{A}$ is bounded by a function $f(n,k)$, the header size of the $\mathcal{A}$ is said to be $f(n,k)$.

\section{Upper Bounds on Header Size for Local Failover Routing} \label{sec:upper}
    In this section,  we present local failover routing schemes $\mathcal{A}_1$ and $\mathcal{A}_2$ respectively providing the upper bounds of Theorem~\ref{thm:upper_k1} and~\ref{thm:upper_k}.

    Since any packet contains the information of its source and destination, one can apply different  forwarding rules for packets with different source-destination pairs. Hence in the following argument we focus on the construction of the forwarding rule for a fixed pair $(s, t)$. The combination of the constructed rules for all $(s, t)$ obviously deduces whole forwarding rule.
    The baseline strategy  common among $\mathcal{A}_1$ and $\mathcal{A}_2$ is stated as follows:
    The routing algorithm tries to send the packet via a path from $s$ to $t$.
    If the packet encounters a faulty arc in the transfer along the path, it records the arc ID in its header. After that, the packet is transferred along a predetermined route in the input graph, provided that the set of faulty arcs carried in the packet header are all removed. Since we assume that $v$ must be reachable to $t$ if the packet is reachable to a vertex $v$ from $s$, there  necessarily exists a path to $t$ avoiding all faulty arcs. Hence one can always design a correct failover routing scheme following this strategy. The design factor lying in this strategy is twofold:  First, how the information of faulty arcs written to the packet header is encoded. Second, which path from $v$ to $t$ is chosen when a set of (identified) faulty arcs is given. 

    \subsection{Upper Bound for a Single Failure} \label{sec:thm1}
        In this section, we show Theorem~\ref{thm:upper_k1}.
        Since explicitly storing faulty arc IDs requires $k \log n$  bits in the worst case, we need to encode the information of the set of faulty arcs more compactly. Our key technical idea is to store the ``time'' when the node fails to send the packet\footnote{While the technique of recording the number of transfers has been utilized in several known literatures particularly in undirected settings~\cite{chiesa2016resiliency, bankhamer2022local}, their purpose appears to be loop avoidance. To the best of our knowledge, our result is the first scheme using such a recorded count for data recovery after failures.}.
        That is, instead of remembering the arc ID, the packet header remembers when the packet forwarding failed, counting from the origination of the packet. We refer to the time when packet transfer fails as a \emph{failing time}.
        Because packet transmission is deterministic, it is possible to determine the route where a packet was transferred by using the header information. In addition, it is possible to restore the set of the faulty arcs the packet encounters from its header.
        \upperBoundKone*
        \begin{proof}
            As we mentioned, it suffices to consider the design of the forwarding rule for a fixed source-destination pair $(s, t)$.
            Let $P = v_0, e_0, v_1, e_1, \dots v_d$ ($v_0 = s$ and $v_d = t$) be any shortest $s$-$t$ path in $G$, and $P_i$ be the shortest $v_i$-$t$ path in $G - e_i$. Note that such a path necessarily exists by the assumption of the problem definition. The header of each packet initially stores the value $d$, which means that the packet does not yet encounter faulty edges. We set up the forwarding rule of $v_0, v_1, \dots,  v_d$ so that the packet with header information $d$ is routed along $P$. If the packet forwarding through arc $e_i$ failed, the header information is updated with $i$. For each $0 \leq i \leq d-1$, we set up the forwarding rules of vertices in $P_i$ so that the packet with value $i$ is routed along it.

            Since we assume $k = 1$, this forwarding rule obviously delivers the packet from $s$ to $t$. The header size is bounded by $\lceil \log (d + 1) \rceil \leq \lceil \log (D + 1) \rceil$.
        \end{proof}
        
    \subsection{Upper Bound for Multiple Failures} \label{sec:thm2}
        In this section, we show Theorem~\ref{thm:upper_k}.
        The strategy for $\mathcal{A}_2$ is close to $\mathcal{A}_1$. The key idea behind $\mathcal{A}_1$ is the packet passes through at most $D$ arcs, and thus it can encounter at most $D$ different positions in its transfer. Hence $\log D +1$ bits suffices to store the information of the faulty arc (note that each forwarding rule is designed in advance with full information of $G$, and thus the information of $s$-$t$ shortest path (i.e., $P$ in the proof of Theorem~\ref{thm:upper_k1}) and other failure-avoiding paths $P_i$ are available).  The scheme $\mathcal{A}_2$ generalizes this approach to the multiple failure case.
        
        If the length of the path delivering packets from $s$ to $t$ is bounded by $g(k)$, then it can store the information of all failing times by using  $(\log \binom{g(k)}{k})$-rewritable bits.
        Unfortunately, a routing strategy that always tries to send the packet along a shortest path to $t$ in the remaining graph (i.e., the graph after removing all encountered faulty arcs) may traverse the $\Omega(kn)$ arcs in the worst case.
        To reduce the number of required bits, we consider the following strategy: An arc that succeeded to send the packet can be safely reused in the subsequent packet routing. Hence we do not have to count up the time when reusing such an arc into account, to identify encountered faulty arcs via failing times. This observation allows us to shrink the range of failing times into a value smaller than the transferred path length. To maximize this benefit, when encountering a faulty arc, we select the path from the current node to $t$ that uses the fewest untraversed arcs, in the graph with removing all encountered faulty arcs. 
        
        We explain our scheme in more details. Our scheme (for each $s$-$t$ pair) consists of the two components $\mathrm{dec}(b, v)$ and the set of paths $P_{F, v}$ for any $F \subseteq E$ of $|F| \leq k$ and $v \in V$. The routing of the packet with information $b$ at $v$ first decodes the rewritable bits $b$ using $\mathrm{dec}(\cdot)$, which returns the set of faulty arcs $F'$ the packet encountered up to $v$. Then the algorithm forwards its packet along the specified path $P_{F', v}$.
        In our scheme, the packet header stores the optimally compressed form of the \emph{failure profile} $c$, which is defined as follows: Let $P = e_1, e_2, \dots, e_h$ be the forwarding edge sequence such that the packet is sent through $e_i$ at its $i$-th forwarding (note that $e_i$ might be faulty and thus it is not guaranteed that $P$ forms a path from $s$ to $v$). The \emph{first edge sequence} $\tau(P)$ of $P$, which is the one obtained from $P$ by extracting the first appearance of each edge. For example, given $P = e_1, e_3, e_2, e_1, e_3, e_4$, $\tau(P)$ becomes $e_1, e_3, e_2, e_4$. The failure profile $c$ is defined as the bit string of length $|\tau(P)|$ such that the $i$-th bit of $c$ is one if and only if the $i$-the edge of $\tau(P)$ is faulty.
        
        The function $\mathrm{dec}(b, v)$ first uncompresses $b$ into $c$, and
        simulates the packet forwarding from $s$ to $v$ under the condition of the failure events represented by $c$. This simulation is possible because $v$ knows whole information on $G$. Obviously, it also reconstructs $P$, and thus provides the information on $F'$. To minimize the size of header information, our scheme chooses $P_{F',v}$ such that the length of $\tau(P \circ P_{F',v})$ is minimized (where $\circ$ means the concatenation of two sequences). More precisely,
        we choose $P_{F',v}$ as the shortest $v$-$t$ path in the graph 
        $G - F'$ where weight zero is given to edge $e$ if it is contained in $P$, or one otherwise. 
        When the packet is sent through edge $e''$, if $e''$ is not contained in $P$, an additional bit 1 is appended to $c$, and its header is rewritten by the compressed form of $c$ after appending. If the packet transfer fails, $e''$ is added to the reconstructed $F'$, and the appended bit is changed to zero. Then the new route $P_{F', e''}$ is chosen. If $e''$ is
        the edge in $P$ (and not in $F$), the packet is transferred without any modification. 

        Now, we show the number of required bits for the rewritable bits under our strategy. Here, we show the following lemma.

        \begin{lemma} \label{lma:tryEdgeNum}
            In local $s$-$t$ routing with $k$ faulty arcs, the number of arcs that must be tried to send the packet to $t$ is at most $2n+k-3$ in our routing strategy. 
        \end{lemma}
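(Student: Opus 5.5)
Throughout, ``tried arcs'' means the distinct arcs of $\tau(P)$ over the whole routing, where $P$ is the forwarding sequence. We aim to show $|\tau(P)| \le 2n+k-3$. Each distinct arc either failed or succeeded. At most $k$ arcs fail, so it suffices to show that at most $2n-3$ distinct non-faulty arcs are ever traversed.

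Split the execution into phases. A phase runs from the start, or from a failure, until the next failure or until arrival at $t$. There are at most $k+1$ phases. In phase $j$ the packet follows $P_{F_j,v_j}$, a shortest $v_j$-$t$ path in $G-F_j$ under weights $0$ for previously traversed arcs and $1$ otherwise. So the number of new successful arcs added in phase $j$ is at most the weighted length $w_j$ of that path.

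The plan is to bound the total number of new arcs via a potential argument on the set $S$ of vertices visited so far. Let $H$ be the subgraph of successfully traversed arcs; it is connected from $s$ and spans $S$. Claim: within a phase, each new arc either enters a vertex not yet in $S$ or is a ``reconnection'' arc. The first kind occurs at most $n-1$ times overall, since $|S|$ starts at $1$ and ends at most $n$. For the second kind, optimality of the zero-weight reuse means the path uses at most one new arc landing in $S$ per maximal run of new arcs. More carefully, a shortest path under these weights that enters an already-visited vertex $u$ would afterwards follow old arcs whenever possible. So a new arc into $S$ is needed only when the remaining route must leave $S$ again. I would try to charge each such arc to a distinct vertex outside $S$ that the path subsequently reaches, or to the vertex $t$, giving at most $n-2$ such arcs (excluding $s$ and accounting for $t$). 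Together these give $(n-1)+(n-2)=2n-3$ successful arcs, and adding at most $k$ failed ones gives $2n+k-3$.

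The main obstacle is making this charging rigorous. A single phase's shortest path might use several new arcs between visited vertices interleaved with new vertices, and failures might invalidate old zero-weight arcs in a way that forces repeated re-entries. I would first try the following: within one phase, the new arcs form a path whose segments alternate between runs inside $S$ and excursions outside $S$. Each run inside $S$ contributes at most one new arc, since otherwise a shorter route using old arcs of $H$ would exist if $H$ connected those vertices. Each run must be followed by a new vertex or end at $t$, which yields the per-vertex charge. If this fails because $H$ is not strongly connected, I would instead bound new arcs by an arborescence argument. The successful new arcs each either extend a BFS-like in-tree toward $t$ or an out-tree from $s$, and each tree has at most $n-1$ arcs, with the root arcs overlapping to save one, giving $2n-3$.
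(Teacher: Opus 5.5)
Your three-way split (at most $k$ failed arcs, at most $n-1$ arcs entering an unvisited vertex, at most $n-2$ first-time arcs entering an already visited vertex) is the same as the paper's. The third bound is the entire content of the lemma, and both mechanisms you propose for it are false.

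Take $m\ge 4$, vertices $s,h_1,\dots,h_m,e_1,e_2,t$, and arcs $(s,h_1)$, $(h_1,s)$, $(h_i,h_{i+1})$ and $(h_{i+1},h_i)$ for $1\le i<m$, $(h_i,t)$ for every $i$, and $(s,e_1),(e_1,e_2),(e_2,t)$. Let all $m$ arcs $(h_i,t)$ be faulty, so $k=m$.
\begin{itemize}
\item Phase~1 follows $s,h_1,t$ (length $2<3$) and fails at $(h_1,t)$.
\item At each $h_i$ with $i<m$, the unique minimum-weight route is $h_i,h_{i+1},t$ (weight $2$, versus $i+3$ for going back). So the packet walks to $h_m$, failing at every $(h_i,t)$.
\item At $h_m$ the only surviving route is $h_m,h_{m-1},\dots,h_1,s,e_1,e_2,t$.
\end{itemize}
This last phase contains $m$ consecutive first-time arcs into visited vertices, all inside a single run in $S$. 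They are followed by only three new vertices, $e_1,e_2,t$. So ``at most one new arc per run inside $S$'' is false, and no injective charge to vertices reached later in the phase exists. The arborescence fallback is only asserted: you give no invariant that would force these arcs to form an in-tree toward $t$.

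The missing idea is a global potential instead of a per-phase charge. Let $H$ be the set of successfully traversed arcs. Suppose the packet at $x$ uses an untraversed arc $(x,y)$ with $y$ already visited. The packet's own walk from $y$ to $x$ lies in $H$, so afterwards $x$ and $y$ are in one strongly connected component of $H$.

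Before the step, $x$ could not reach $y$ in $H$. Suppose $Q$ were such a path. Let $w$ be $x$, or, if $Q$ uses arcs traversed in the current phase, the head of the last such arc; this head lies on the phase path no later than $x$. The part of $Q$ from $w$ to $y$ consists of arcs that had weight $0$ when $P_{F',v}$ was chosen. The segment of $P_{F',v}$ from $w$ to $y$ contains the weight-$1$ arc $(x,y)$. Replacing that segment by the part of $Q$ therefore gives a $v$--$t$ walk of strictly smaller weight, and shortcutting it to a path cannot increase the weight. This contradicts the choice of $P_{F',v}$.

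Hence each such arc merges two components, while an arc into an unvisited vertex only adds a singleton component. Since $t$ is entered only once, at the very end, all merges happen among at most $n-1$ vertices, so there are at most $n-2$ of them. This is what the paper's cycle-contraction argument captures.
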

        \begin{proof}
            The arcs which tries to transfer the packet from $s$ to $t$ can be classified into the following three types:
            \begin{enumerate}
                \item Arcs used for sending the packet to an unvisited node,
                \item Arcs used for sending the packet to an already visited node,
                \item Arcs that could not send the packet due to failure.
            \end{enumerate}
            The example of these three arc types is Figure~\ref{fig:arcTypes}.
            
            The number of type~(1) arcs is at most $n - 1$. The number of type~(3) arcs is at most $k$. 
            Now we consider the number of type~(2) arcs. 
            When a packet can be sent to an already visited node using a type~(2) arc, a directed cycle is formed by type~(1) arc(s) and a type~(2) arc.
            Since the directed cycle consists only of arcs that have been used in the previous routing, every node in the cycle is reachable by using only those previously used arcs.
            Moreover, no arc with both endpoints in the cycle is used in subsequent routing unless it belongs to the directed cycle.
            Therefore, the vertices included in the cycle can be considered as a single node in subsequent routing. 
            When the cycle is contracted into a single node, the number of nodes in the graph decreases by at least one, so the number of type~(2) arcs is at most $n-2$.

            Therefore, the number of arcs that must be tried to send the packet to $t$ is at most $2n+k-3$.
        \end{proof}
        
        \begin{figure}[h]
        \begin{center}
            \includegraphics[keepaspectratio, scale=0.7]{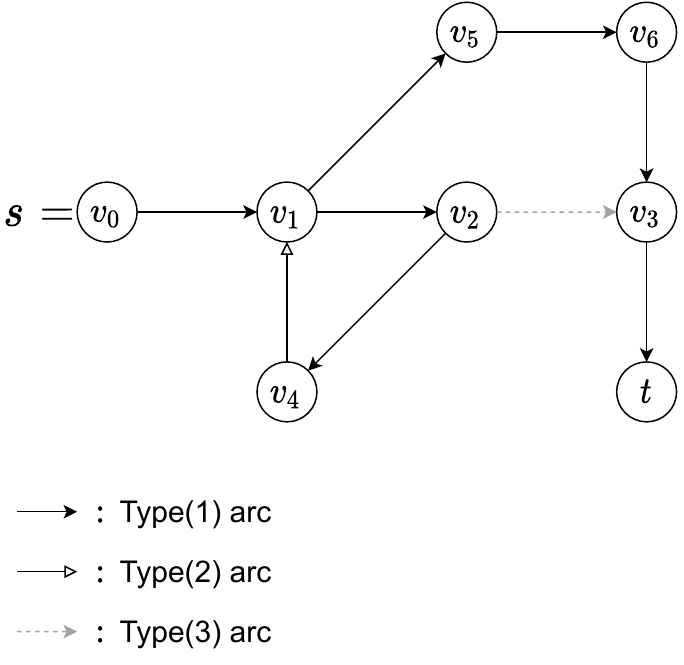} \\
            \caption{An example of arc types. In this case, the forwarding rule begins sending the packet from $s=v_0$. It sends the packet along the path $(v_0, v_1, v_2, v_3, t)$. When it tries to send to $v_3$, an arc $(v_2,v_3)$ is failed. So, it sends the packet along the path $(v_2, v_4, v_1, v_5, v_6, v_3, t)$.}
            \label{fig:arcTypes}
        \end{center}
        \end{figure}
        
        Now we show Theorem~\ref{thm:upper_k}.
        \upperBoundK*        
        \begin{proof}
            From Lemma~\ref{lma:tryEdgeNum}, it is sufficient to try sending the packet at most $2n+k-3$ times.
            In our strategy, the number of bits required for recording is $2n+k-3$ because the packet selects the path that is minimum number of unsent arcs before when the node selects the path. The number of faulty arcs is at most $k$, and when the number of faulty arcs detected is $l(<k)$, the number of $1$s in the recorded bit string can always be $k$ by setting the last $k-l$ digits of the recorded bits to $1$. There are a total of $\binom{2n+k-3}{k}$ ways for a bit string with $k$ $1$s out of $2n+k-3$ digits, so when writing this into the packet header, it can be encoded as a value of $(\log \binom{2n+k-3}{k})$ bits.
            Recall that if the information about the removed arcs in previous routing is stored in the packet header, it is possible to determine the route where a packet was sent. Furthermore, it can construct the forwarding route which does not contain a directed cycle on a graph with $k'$ ($k'<k$) arcs removed.
            Therefore, the number of times the packet transmission has been attempted can be calculated from the information in the packet header and on the current node. Therefore, $(\log \binom{2n+k-3}{k})$-rewritable bits are sufficient.
            Using Stirling's formula, we obtain the following inequality:
            \begin{align*}
                \log \binom{2n+k-3}{k}& \leq \log \left( \frac{(2n+k-3)^k}{(\frac{k}{e})^k}\right)\\
                &= k \log \left( \frac{e(2n+k-3)}{k}\right),
            \end{align*}
            where $e$ is Napier's constant.
            Similarly, the following inequality also holds:
            \begin{align*}
                \log \binom{2n+k-3}{k}&= \log \binom{2n+k-3}{2n-3}\\
                &\leq \log \left( \frac{(2n+k-3)^{2n}}{(\frac{2n}{e})^{2n}}\right)\\
                &= 2n \log \left( \frac{e(2n+k-3)}{2n}\right).
            \end{align*}
            Therefore, the number of bits required for a packet header is $\min(k \lceil\log \frac{e(2n+k-3)}{k}\rceil$, $2n \lceil\log \frac{e(2n+k-3)}{2n}\rceil)$.
        \end{proof}

\section{Lower Bound for bit-string length} \label{sec:lower}
    In this section, we show the lower bounds on the number of rewritable bits in the packet header for $k\geq 3$.
    To prove this, we construct a graph in which packet headers must encode a large number of failure patterns which must be distinguished for correct delivery of packets.
    First, we show the construction of a gadget $A_{\ell}$ as follows:
    \begin{itemize}
        \item Create two paths $(u_0,u_1,\dots,u_{\ell-1})$ and $(v_{0},v_1,\dots,v_{\ell-1})$.
        \item  Add an arc $(u_j,v_j)$ for each $0\leq j \leq \ell -1$.
    \end{itemize}
    An example of $A_\ell$ is illustrated in Figure~\ref{fig:uv}. 
    
    \begin{figure}[h]
    \begin{center}
        \includegraphics[keepaspectratio, scale=0.9]{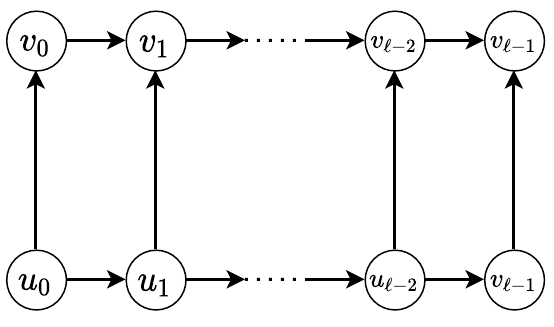} \\
        \caption{The example of the gadget $A_\ell$.}
        \label{fig:uv}
    \end{center}
    \end{figure}
    
    Here, we show the following lemma.
    \begin{lemma} \label{lma:remove3arcs}
        Prepare a vertex $r$ and a gadget $A_{\ell}$. Let $s = r$ and $t = v_{\ell-1}$, and add arcs $(u_i,r)$, $(v_i,r)$ for $0\leq i \leq \ell-1$ and $(r,u_0)$. 
        An example of the player 2 strategy is shown in Figure~\ref{fig:lowerbound-k3}.
        Let $k = 3$. Now, the player 2 can decide a set of faulty arcs so that any surviving $s$-$t$ path necessarily contains $(u_j, v_j)$ for any $0 \leq j \leq \ell-1$. Precisely, if the player 2 wants to enforce $s$-$t$ paths containing $(u_j, v_j)$, it breaks the following arcs: 
        \begin{itemize}
            \item $(u_{j+1}, u_{j+2})$ if $0 \leq j\leq  \ell-3$.
            \item $(u_{j+1}, v_{j+1})$ if $0\leq j\leq  \ell-2$. 
            \item $(v_{j-1},v_{j})$ if $1 \leq j \leq \ell-1$.
        \end{itemize}
        In this case, at least $\lceil \log \ell\rceil$ bits are required as rewritable bits in the packet header.
    \end{lemma}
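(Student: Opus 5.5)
Fix $s=r$, $t=v_{\ell-1}$, and let $F_j$ be the set of at most three arcs listed in the statement for index $j$. The plan is a pigeonhole argument. First check that for every $j$, the subgraph reachable from $s$ in $G-F_j$ satisfies the model's promise, so any correct routing must deliver under every $F_j$. Then show that under $F_j$ the packet is forced through a state at which the nodes cannot tell which $j$ occurred, unless the header records it. Since $j$ ranges over $\ell$ values, the header must distinguish $\ell$ states, which needs $\lceil\log\ell\rceil$ bits.

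\textbf{Structure of $G-F_j$.} From $r$ the only out-arc is $(r,u_0)$. Along the $u$-path, the arc $(u_{j+1},u_{j+2})$ is cut, so from $r$ the packet can reach $u_0,\dots,u_{j+1}$ and return to $r$ via the arcs $(u_i,r)$. The arc $(u_{j+1},v_{j+1})$ is cut, so the only way down to the $v$-path is some $(u_i,v_i)$ with $i\le j$. Because $(v_{j-1},v_j)$ is cut, entering the $v$-path at $v_i$ with $i<j$ lets the packet reach only $v_i,\dots,v_{j-1}$, from which it can go back to $r$ but never reach $t$. Only entering at $v_j$ continues $v_j\to\cdots\to v_{\ell-1}=t$. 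Hence every surviving $s$-$t$ path uses $(u_j,v_j)$. Every node reachable from $s$ can reach $r$ (via arcs to $r$), and $r$ reaches $t$, so the promise holds. The edge cases $j=0$, $j=\ell-2$ and $j=\ell-1$ must be checked separately, since some of the three cut arcs are absent there; when $j=\ell-1$ there is no $u_{\ell}$, so the argument must use that $(v_{\ell-2},v_{\ell-1})$ is cut.

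\textbf{Indistinguishability.} Assume the header has fewer than $\lceil\log\ell\rceil$ bits, so fewer than $\ell$ header values exist. Consider the packet's behavior at $u_0$. Its local view there (in-arc, local faulty set, $s,t$) is identical for all $j\ge1$, because no arc out of $u_0$ is faulty. More generally, for each $i$, the local fault information at $u_i$ for $i\le j$ depends on $j$ only through whether $u_i$ is an endpoint of a cut arc, and that happens only at $u_{j+1}$, which is beyond the correct exit point $u_j$. So, at $u_j$, the node must choose $(u_j,v_j)$ under $F_j$, but under $F_{j'}$ with $j'>j$ it must not (after taking that arc, $t$ is unreachable, and the packet can return only via $r$). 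Because the routing is deterministic and the packet eventually arrives at $t$, I track the sequence of states (node, in-arc, header) at which the packet visits $r$ and the $u$-nodes. The idea is that before the packet first enters the $v$-path under $F_j$, it must have learned $j$, and the only place that information can sit is the header. For each $j$, take the header value $h_j$ carried on the last traversal of $(r,u_0)$ before the successful descent. If $h_j=h_{j'}$ for some $j<j'$, then from that point both runs see identical local views until $u_{j+1}$. The run for $j$ descends at $u_j$ and never reaches $u_{j+1}$, so the run for $j'$ also descends at $u_j$. It then gets stuck in $\{v_j,\dots,v_{j'-1}\}$, returns to $r$, and contradicts the choice of "last traversal". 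So the values $h_j$ are pairwise distinct, and the header needs at least $\ell$ values.

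\textbf{Main obstacle.} The hard part is making the "last traversal" argument rigorous. The $v$-path exits lead back to $r$, and local views at $v$-nodes differ with $j$ (the cut $(v_{j-1},v_j)$ is visible at $v_{j-1}$). So a failed probe can itself gather information, and I must be careful about which visits share local views. I would first formalize the claim that the run for $j'$ copies the run for $j$ from the common state up to the moment the $j$-run commits to $(u_j,v_j)$. The only differing local information lies at $u_{j+1}$ or at $v_{j'-1}$, neither of which the $j$-run touches after that point, so determinism gives identical behavior. The contradiction then follows from the run for $j'$ having to return to $r$ afterwards.
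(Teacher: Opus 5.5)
Your proof is correct and uses the same pigeonhole idea as the paper: the header value with which the packet leaves $r$ fixes the index at which it descends, because $u_0,\dots,u_j$ see no local faults under $F_j$. Hence $\ell$ distinct header values are needed. Taking the header on the \emph{last} traversal of $(r,u_0)$ is what makes the paper's informal step (``at most $\ell-1$ distinct paths, so some $(u_j,v_j)$ is never traversed'') rigorous despite earlier failed probes and header rewrites.
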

    \begin{proof}
        As long as the player 2 follows the strategy described above, since every node has an arc to $s$ and there is a path from $s$ to $t$, any node reachable from $s$ can still reach $t$ even after removing the arcs.
        Removing $(u_{j+1}, u_{j+2})$ and $(u_{j+1}, v_{j+1})$ disconnects the paths from $s$ to $t$ through $(u_{j'},v_{j'})$ for any $j<j'$.
        Removing the third arc prevents the path from $s$ to $t$ through $(u_{j'},v_{j'})$ for any $j'<j$.
        As a result, the only valid way to reach $t$ is through the arc $(u_j,v_j)$\footnote{In local failover routing, the next hop is determined using the set of locally faulty outgoing arcs and the packet header. Therefore, if the packet at $u_j$ attempts to use the arc $(u_j,u_{j+1})$ and this arc is faulty, node $u_j$ can detect this locally without using the packet header. Thus, routing can be performed simply by attempting to move along the arc $(u_j,v_j)$, even without the information in the packet header. To prevent this header-independent forcing, both $(u_{j+1},u_{j+2})$ and $(u_{j+1},v_{j+1})$ must be failed to return the packet to $s$. If both arcs are failed, then when the packet arrives at $u_j$, the forwarding rule must decide whether to use the arc $(u_j,v_j)$. Since the packet at $u_j$ cannot observe the set of locally faulty outgoing arcs of $u_{j+1}$, this decision must be made using the packet header.}.

        Suppose for contradiction that there exists an algorithm that can route the packet using at most $\lceil \log \ell \rceil - 1$ rewritable bits.
        From the bound of the rewritable bit length, in this algorithm, the packets can traverse at most $\ell-1$ distinct paths between $s$ and $t$. Therefore, there exists an arc $(u_j, v_j)$ that cannot be traversed regardless of the header contents. If the player 2 adopts a strategy such that only the path that passes through this arc can reach $t$, then the packet routing between $s$ and $t$ will always fail, which is a contradiction.
    \end{proof}
    
    \begin{figure}[h]
    \begin{center}
        \includegraphics[keepaspectratio, scale=0.65]{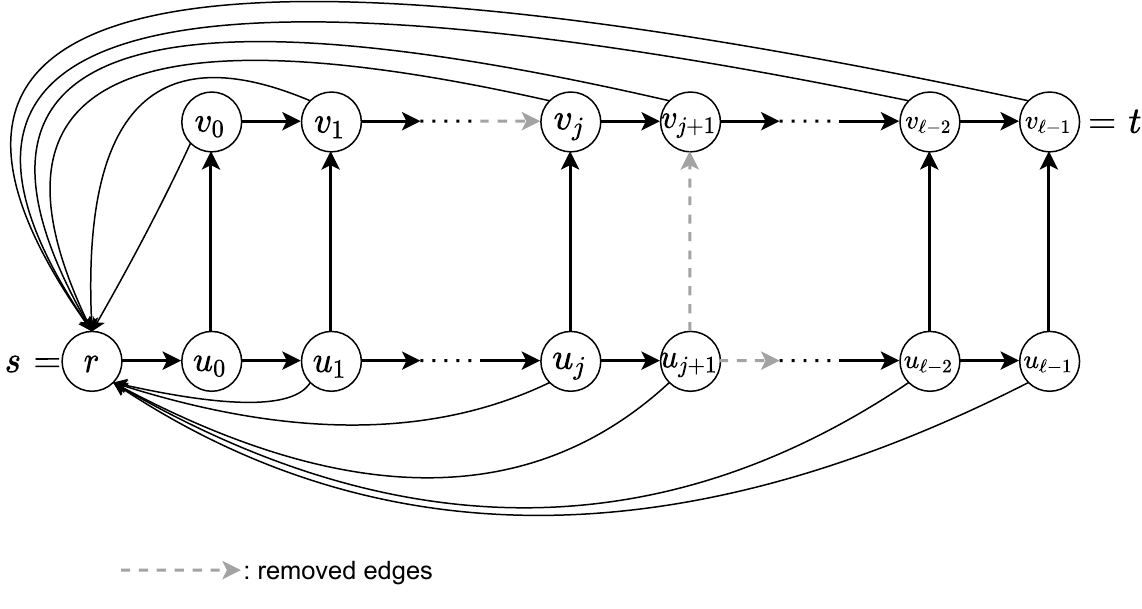} \\
        \caption{The example of removing arcs in Lemma~\ref{lma:remove3arcs}. Only the path through $(u_j,v_j)$ is a valid route to $t$.}
        \label{fig:lowerbound-k3}
    \end{center}
    \end{figure}
    
    \lowerBoundKthree*    
    \begin{proof}
        First, we consider when $n$ is odd.
        Let $N = ( n - 1 ) / 2$.
        The construction of a graph $G$ consists of a node $r$ and a gadget $A_{N}$.
        We add arcs $(u_i,r)$, $(v_i,r)$ for $0\leq i \leq N-1$ and $(r,u_0)$.
        Let $s = r$ and $t=v_{N-1}$.
        If the player 2 uses the same strategy as in Lemma~\ref{lma:remove3arcs} to remove arcs, then at least $\log N = \log\frac{n - 1}{2} = \lceil \log (n - 1)\rceil - 1$ rewritable bits are needed.

        Next, we consider when $n$ is even.
        Let $N = ( n - 2 ) / 2$.
        The construction of a graph $G$ consists of a node $q$, $r$ and a gadget $A_{N}$.
        We add arcs $(u_i,q)$, $(v_i,q)$ for $0\leq i \leq N-1$, $(q, r)$ and $(r,u_0)$.
        Let $s = r$ and $t=v_{N - 1}$.
        If the player 2 uses the same strategy as in Lemma~\ref{lma:remove3arcs} to remove arcs, then at least $\log N = \log\frac{n - 2}{2} = \lceil \log (n - 2)\rceil - 1$ rewritable bits are needed.
    \end{proof}
    
    Next, we construct a lower bound graph for general $k$ by cascading multiple copies of the gadget
 $A_{\ell}$.
    By Lemma~\ref{lma:remove3arcs}, we can limit the number of paths from $u_0$ to $v_{\ell -1}$ in the gadget $A_{\ell}$ to only one by removing three arcs. In this case, $\log \ell$ bits are required to remember which $(u_j,v_j)$ in the gadget is the path from $u_0$ to $v_{\ell-1}$. Thus, the total amount of information that must be encoded in the packet header grows in proportion to the number of connected gadgets.
    Now, we show Theorem~\ref{thm:lower_k}.
    
    \lowerBoundK*
    \begin{proof}
        Let $N=3(n-1)/2k$. 
        The lower bound graph $G$ consists of a node $r$ and $k / 3$ gadgets $A_{N}$ as $A_N^0, A_N^1, \dots, A_N^{\frac{k}{3}-1}$. Here, we denote any nodes $u_i$ and $v_i$ in $A_{N}^j$ by $u_i^j$ and $v_i^j$, respectively.
        Note that in our construction, we requires that the number of nodes in each gadget is greater than or equal to $8$. Hence, our construction only holds for the case of $k \leq \frac{3(n-1)}{8}$.
        We add arcs as follows:
        \begin{itemize}
            \item Arcs $(u_i^j,r)$, $(v_i^j,r)$ for $0 \leq i \leq N-1$ and $0\leq j \leq \frac{k}{3}-1$.
            \item Arcs $(v_{N-1}^{i},u_{0}^{i+1})$ for $0 \leq i \leq \frac{k}{3}-2$.
            \item An arc $(r, u_0^0)$.
        \end{itemize}
        Let $s=r$ and $t=v_{N-1}^{\frac{k}{3}-1}$.
        For each gadget $A^{j}$, the player 2 determines the index $i_j$ ($0\leq i_j\leq N-1$).
        Using a technique similar to Lemma~\ref{lma:remove3arcs}, for each gadget $A^{j}$, it removes at most three arcs such that only paths through the arc $(u^{j}_{i_j},v^{j}_{i_j})$ can reach $t$ from $s$.
        Note that player 2 can determine the index $i_j$ of each gadget after looking at the routing strategy of player 1. In addition, the indices of each gadget are selected independently.
        Therefore, player 1 must design a routing strategy that tries at least $N^{k/3}$ routes dependent on the packet header information. If the correct route is not selected, the packet is necessarily returned to $s$.
        Therefore, at the time of sending a packet from $s$ to $u_{0}^0$, the route of the packet determined by the header information must be correct eventually (recall that the packet header information is not updated unless it encounters a faulty arc, and thus the packet is not adaptively routed as long as it is correctly transferred). To decide which route to use, at least $\frac{k}{3}\log N = \frac{k}{3}\log \frac{3(n-1)}{2k} \in \Omega(k\log\frac{n}{k})$ rewritable bits are needed.
    \end{proof}
    \begin{figure}[t]
    \begin{center}
        \includegraphics[keepaspectratio, scale=0.65]{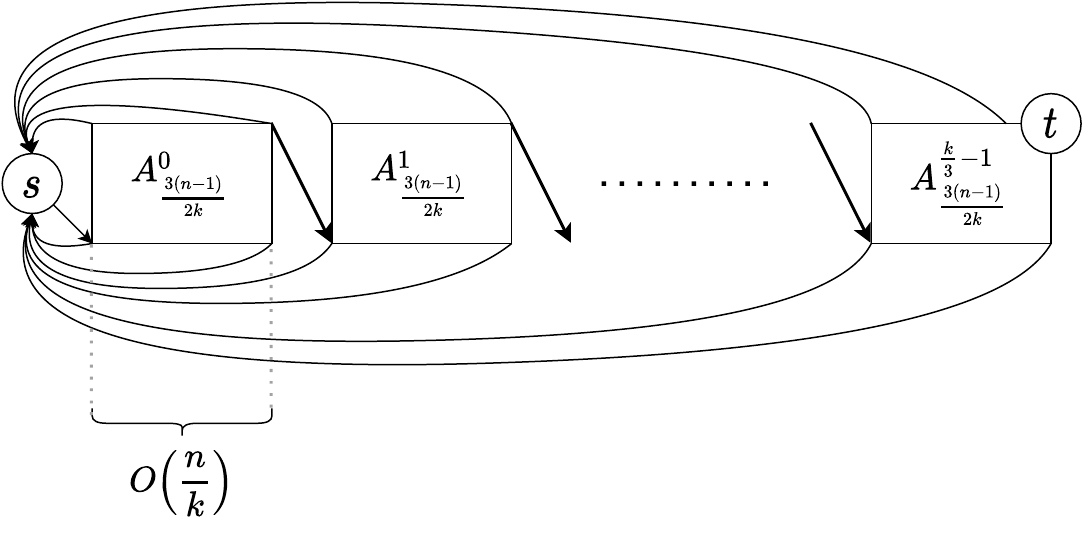} \\
        \caption{The example of the lower bound construction for general $k$. The graph consists of $k/3$ disjoint gadgets. Player 2 removes at most three arcs from each gadget.}
        \label{fig:lowerbound-k}
    \end{center}
    \end{figure}

    \lowerBoundMax*
    \begin{proof}
        Here, when $k = \frac{3(n-1)}{8}$, the lower bound graph in Theorem~\ref{thm:lower_k} requires $\frac{n-1}{4}$ bits in the packet header. 
        The player 2 can also remove only $\frac{3(n-1)}{8}$ arcs.
        Therefore, we can prove that the lower bound of the packet header size is $\frac{n-1}{4}$ bits if $k > \frac{3(n-1)}{8}$ by using the same lower bound graph of Theorem~\ref{thm:lower_k}. Therefore, Theorem~\ref{thm:lower_max} holds.
    \end{proof}

\section{Conclusion and Open Problems} \label{sec:conclusion}
    This study shows the upper and lower bounds of the number of rewritable bits required in the packet header for the local failover routing on directed graphs. As a result, we presented the local failover routing scheme that can route packets using only $\min(k \log ( \frac{e(2n+k-3)}{k},2n \log ( \frac{e(2n+k-3)}{2n})))$ rewritable bits. Moreover, any local failover routing scheme needs $\Omega(k\lceil\log\frac{n}{k}\rceil)$ rewritable bits when the number of faulty arcs is equal to or less than $\frac{3(n-1)}{8}$ and $\frac{n-1}{4}$ rewritable bits when the number of faulty arc is more than $\frac{3(n-1)}{8}$.
    Our results proposed a near optimal routing scheme when the number of faulty arcs is approximately less than the number of nodes. However, we have not yet shown the optimal routing scheme in which most of arcs are faulty.
    It is also interested in considering whether routing is possible for each graph property and type of failures.


\bibliography{reference}


\end{document}